\pgfplotsset{compat=1.18}
\newtheorem{assumption}{Assumption}
\newtheorem{corollary}{Corollary}
\newtheorem{proposition}{Proposition}
\newtheorem{lemma}{Lemma}
\newtheorem{theorem}{Theorem}
\title{\LARGE \bf
Finite Memory Belief Approximation for Optimal Control in \\ Partially Observable Markov Decision Processes
}
\author{Mintae Kim
\thanks{The author is with \textit{Hybrid Robotics Lab}, University of California, Berkeley, CA 94720, United States.}
\thanks{E-mail: \texttt{mintae.kim@berkeley.edu}}
}
\begin{document}

\maketitle
\thispagestyle{empty}
\pagestyle{empty}


\begin{abstract}

We study finite memory belief approximation for partially observable (PO) stochastic optimal control (SOC) problems.
While belief states are sufficient for SOC in partially observable Markov decision processes (POMDPs), they are generally infinite-dimensional and impractical.
We interpret truncated input-output (IO) histories as inducing a belief approximation and develop a metric-based theory that directly relates information loss to control performance.
Using the Wasserstein metric, we derive policy-conditional performance bounds that quantify value degradation induced by finite memory along typical closed-loop trajectories.
Our analysis proceeds via a fixed-policy comparison: we evaluate two cost functionals under the same closed-loop execution and isolate the effect of replacing the true belief by its finite memory approximation inside the belief-level cost.
For linear quadratic Gaussian (LQG) systems, we provide closed-form belief mismatch evaluation and empirically validate the predicted mechanism, demonstrating that belief mismatch decays approximately exponentially with memory length and that the induced performance mismatch scales accordingly.
Together, these results provide a metric-aware characterization of what finite memory belief approximation can and cannot achieve in PO settings.
\end{abstract}

\begingroup
\renewcommand{\thefootnote}{}
\footnotetext{
Codes and supplementary materials are available at \url{https://github.com/mintaeshkim/fmba}.
}
\endgroup


\section{Introduction}

In PO stochastic optimal control (SOC), the controller does not directly observe the system state.
Instead, decisions must be based on past observations and control inputs, i.e., the IO history.
It is well-known that optimal control can be expressed in terms of the belief state, the posterior distribution of the current state given the IO history, which induces an exact fully observed belief-Markov decision process (belief-MDP) formulation of a POMDP \cite{smallwood1973optimal, kaelbling1998planning, saldi2017asymptotic}.
While exact, the belief is in general infinite-dimensional even for simple continuous-state systems, making it impractical to compute and store \cite{saldi2018finite}.
As a result, practical controllers for PO systems rely on finite memory.
A common architecture uses a sliding window of recent observations and inputs and selects a control action via a finite memory policy \cite{saldi2017asymptotic, kara2022near}.
Throughout this work, \emph{finite memory} refers to operating on a truncated IO history, yielding a finite-dimensional information state rather than the full IO history.
Such finite memory architectures are widely used in both classical and learning-based control, yet their theoretical justification and performance analysis remain incomplete \cite{white1994finite, subramanian2019approximate, kim2025roverfly, cai2025learning}.
A central question is when finite memory can act as a meaningful substitute for the belief state and how the resulting information loss affects closed-loop performance.

Finite memory policies in PO systems have been studied extensively \cite{saldi2017asymptotic, saldi2018finite, kara2022near, kara2024partially}.
In particular, \cite{kara2022near} established near-optimality results for finite memory policies in POMDPs under non-uniform, typical-trajectory approximation criteria, showing that small performance loss can be achieved without uniform approximation over all observation sequences.
In robotics and learning-based control, it is common to feed a finite IO window (often together with the current observation) into a learned controller \cite{cai2025learning, kim2025roverfly}.
However, from an SOC perspective, several gaps remain.
Finite memory is typically treated as a restriction on the policy class rather than as an approximation of the underlying belief process \cite{kara2022near}.
Moreover, existing results rarely provide a metric-aware, quantitative relationship between information loss and value degradation along closed-loop trajectories, and fundamental limitations are often implicit.

In this paper, we develop a metric-based theory of finite memory approximation of belief states for partially observable stochastic optimal control (POSOC).
Rather than viewing finite memory as only a policy restriction, we interpret truncated IO histories as inducing an explicit \emph{finite memory belief approximation} and measure its discrepancy from the true belief in the Wasserstein-2 metric along trajectories generated by a fixed policy.
This policy-conditional perspective avoids uniform worst-case requirements over unlikely histories and yields finite, interpretable bounds \cite{kara2022near}.
Under suitable regularity conditions, we show that a Wasserstein belief mismatch controls the performance gap between the true-belief cost functional and its finite memory counterpart when both are evaluated under the same closed-loop execution, and we lift this fixed-policy comparison to an optimal value gap bound.
We also characterize fundamental limitations of finite memory control, including the necessity of retaining input history for belief reconstruction.
Finally, we specialize the framework to LQG systems, where belief mismatch and performance mismatch can be computed in closed form, and we empirically verify the paper's central mechanism: truncating IO history induces a measurable belief mismatch that quantitatively explains performance degradation.


\section{Problem Setup and Preliminaries}
\label{sec:setup}

This section introduces the POSOC problem studied in this paper and establishes all objects and notations.

We consider an infinite-horizon discounted POMDP specified by the tuple $(\mathcal{X},\mathcal{U},\mathcal{Y},P,O,c,\gamma)$, where $\mathcal{X}\subset\mathbb{R}^n$ is the state space, $\mathcal{U}\subset\mathbb{R}^m$ is the control space, $\mathcal{Y}\subset\mathbb{R}^p$ is the observation space, $P(x' \mid x,u)$ is the transition kernel, $O(y \mid x)$ is the observation kernel, $c:\mathcal{X}\times\mathcal{U}\to\mathbb{R}$ is the stage cost, and $\gamma\in(0,1)$ is the discount factor.\footnote{We adopt a kernel-based formulation to retain generality. SDE-based models will be discussed as LQG special cases via discretization and are treated explicitly in Section~\ref{sec:lqg}.}
The system evolves according to $x_{t+1}\sim P(\cdot\mid x_t,u_t)$ and $y_t\sim O(\cdot\mid x_t)$, and the controller observes only the IO history $\zeta_t=(y_{0:t},u_{0:t-1})$ based on which it selects $u_t=\pi_t(\zeta_t)$.

The belief state associated with $\zeta_t$ is defined by
\begin{equation}
b_t := \mathbb{P}(x_t\mid \zeta_t),
\end{equation}
which is a probability measure on $\mathcal{X}$.
Note that the belief is defined as a probabilistic measure, not a state by itself.
The belief evolves via the \emph{Bayesian filter} $b_{t+1} = \Phi(b_t,u_t,y_{t+1})$, where $\Phi$ denotes the belief update operator induced by $(P,O)$. 
Using the belief state, a POMDP admits an exact reduction to a fully observable MDP on the belief space $\mathcal{P}(\mathcal{X})$, commonly referred to as the \emph{belief-MDP}.
This reduction relies on the fact that the belief $b_t$ is a sufficient statistic for control in the following sense:
any two IO histories inducing the same belief lead to identical conditional distributions over future states, observations, and costs under any control sequence.
In particular, given a belief $b_t$ and input $u_t$, the predictive distribution of the next state is uniquely determined by
$\mathbb{P}(x_{t+1}\in \cdot \mid b_t,u_t) = \int P(\cdot \mid x,u_t)\, b_t(dx)$,
and the next belief $b_{t+1}$ is obtained by applying the Bayesian filter to $(b_t,u_t,y_{t+1})$.
Consequently, the belief process $\{b_t\}$ is a controlled Markov process satisfying
\begin{equation}
\label{eq:belief-mdp}
\mathbb{P}(b_{t+1}\mid y_{0:t},u_{0:t}) = \mathbb{P}(b_{t+1}\mid b_t,u_t).
\end{equation}
Moreover, the stage cost admits the exact belief-level representation $\bar c(b_t,u_t):=\mathbb{E}_{x\sim b_t}[c(x,u_t)] = \int_{\mathcal{X}
}c(x,u_t)b_t(dx)$,
and policies defined on the belief space are equivalent to history-dependent policies.
Thus, the belief-MDP formulation incurs no approximation or loss of optimality.

As a result, once the belief is taken as the system state, the POSOC problem reduces to a fully observable discounted MDP on $\mathcal{P}(\mathcal{X})$, and standard dynamic programming arguments apply.
In particular, the optimal value function is
\begin{equation}
V^\star(b_0)
:=
\inf_{\pi}
\mathbb{E}\!\left[
\sum_{t=0}^{\infty}\gamma^t \bar c(b_t,u_t)
\;\middle|\; b_0
\right],
\end{equation}
where the infimum is over belief-based policies $u_t=\pi(b_t)$.
The associated Bellman operator is
\begin{equation}
(\mathcal{T}V)(b)
=
\inf_{u\in\mathcal{U}}
\left\{
\bar c(b,u)
+
\gamma\,\mathbb{E}\!\left[V(\Phi(b,u,y'))\right]
\right\},
\end{equation}
where $y'$ is distributed according to the predictive observation law induced by $(b,u)$.
Policies defined on the full IO history and policies defined on the belief state are equivalent representations of the same decision rule, and throughout the paper we adopt the belief-based representation for simplicity.

In the following sections, to quantify belief approximation errors, we work on the Wasserstein metric space
\begin{equation}
\label{eq:Wasserstein-metric-space}
\mathcal{P}_2(\mathcal{X}) :=
\left\{
\mu\in\mathcal{P}(\mathcal{X})
\;\middle|\;
\mathbb{E}_{x\sim\mu}[\|x\|^2]\in (0, \infty)
\right\},
\end{equation}
equipped with the Wasserstein-2 distance $W_2$.

For a memory length $H$, we define the truncated IO history
\begin{equation}
\zeta_t^{(H)} := (y_{t-H:t},u_{t-H:t-1}),
\end{equation}
and the corresponding \emph{finite memory belief approximation}
\begin{equation}
\hat b_t^{(H)} := \mathbb{P}(x_t\mid \zeta_t^{(H)}).
\end{equation}

These objects fully specify the belief approximation framework used in the remainder of the paper.


\section{Finite Memory Belief Approximation}
\label{sec:finite-memory}

This section analyzes the finite memory belief approximation $\hat b_t^{(H)}$ defined in Section~\ref{sec:setup} and establishes an exponential bound on the policy-conditional belief approximation error induced by using truncated IO history.
Throughout this section, all beliefs take values in $\mathcal{P}_2(\mathcal{X})$ equipped with the Wasserstein-2 distance $W_2$.

For a fixed policy $\pi$, let $b_t^\pi$ denote the true belief process induced by the closed-loop trajectory under $\pi$, and let $\hat b_t^{(H),\pi}$ denote the finite memory belief approximation induced by the same truncated IO history.
Both beliefs are defined on the same probability space and differ only by $\sigma$-algebras.

We define the policy-conditional finite memory belief approximation error as
\begin{equation}
\label{eq:belief-approx-error}
\varepsilon_H(\pi)
:=
\sup_{t\ge 0}
\mathbb{E}_\pi\!\left[
W_2\!\left(b_t^\pi,\hat b_t^{(H),\pi}\right)
\right].
\end{equation}
This definition evaluates approximation quality only along trajectories realized under the closed-loop distribution induced by $\pi$ and avoids uniform supremum over IO histories.
\emph{Uniform approximation} is a natural but overly restrictive approach, which controls the worst-case discrepancy between the belief state and its finite memory approximation over all possible histories.
In stochastic systems, the space of feasible IO histories is vast, and many such trajectories occur with vanishing probability under closed-loop feedback control.
Uniform approximation therefore impose unnecessarily strong requirements, effectively demanding accurate approximation even along exponentially rare sample paths.
In particular, uniform approximation implicitly requires pathwise stability of belief update under control, which is generally unavailable in controlled settings \cite{kara2022near, kara2024partially}.

To obtain finite and policy-independent constants in the bounds below, and to exclude degenerate behaviors unrelated to information truncation, we restrict attention to a stabilizing policy class $\Pi_{\mathrm{stab}}$ under which all belief processes have uniformly bounded second moments.

The effect of finite memory truncation is governed by how rapidly the belief update forgets remote information under closed-loop operation.

\begin{assumption}[Controlled forgetting on $\Pi_{\mathrm{stab}}$]
\label{assump:controlled-forgetting}
There exist constants $\rho\in(0,1)$ and $C_\pi\in(0,\infty)$ such that for any $\pi\in\Pi_{\mathrm{stab}}$, any initial beliefs $\mu,\nu\in\mathcal{P}_2(\mathcal{X})$, and all $t\ge 0$,
\begin{equation}
\mathbb{E}\!\left[ W_2\!\left(
\Phi_t^\pi(\mu),\,
\Phi_t^\pi(\nu)
\right)
\;\middle|\;
u_{0:t-1}
\right]
\le
C_\pi\,\rho^t\,W_2(\mu,\nu),
\end{equation}
where $\Phi_t^\pi(\cdot)$ denotes the $t$-step belief update driven by the realized IO sequence generated by $\pi$.
\end{assumption}
\noindent
Assumption~\ref{assump:controlled-forgetting} expresses exponential stability of the belief update conditional on the realized input sequence and does not require pointwise contraction of the belief update.

For $t\ge H$, define the boundary belief at time $t-H$ by
\begin{equation}
\tilde b_{t-H}^\pi := \mathbb{P}_\pi(x_{t-H}\mid y_{t-H}).    
\end{equation}
Then $\hat b_t^{(H),\pi}$ is obtained by initializing the belief recursion at time $t-H$ with $\tilde b_{t-H}^\pi$ and applying the belief update operator along the realized sequence $(u_{t-H:t-1},y_{t-H+1:t})$.

\begin{lemma}[Finite memory belief representation]
\label{lem:window-representation}
For all $t\ge H$, the finite memory belief approximation satisfies
$\hat b_t^{(H),\pi} = \Phi_H^\pi(\tilde b_{t-H}^\pi)$,
where $\Phi_H^\pi(\cdot)$ denotes the $H$-step belief update driven by the realized IO sequence.
\end{lemma}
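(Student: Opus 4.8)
The plan is to unroll the Bayesian filter over the length-$H$ window and identify the truncated posterior with the boundary-initialized recursion. Fix $t\ge H$ and, for $0\le k\le H$, let $\beta_k$ denote the conditional law of $x_{t-H+k}$ given the partial truncated history $(y_{t-H:\,t-H+k},\,u_{t-H:\,t-H+k-1})$ under $\mathbb{P}_\pi$. By the definition of $\hat b_t^{(H),\pi}$ we have $\beta_H=\hat b_t^{(H),\pi}$, and for $k=0$ the conditioning list reduces to $y_{t-H}$ alone (the input block $u_{t-H:\,t-H-1}$ is empty), so $\beta_0=\mathbb{P}_\pi(x_{t-H}\mid y_{t-H})=\tilde b_{t-H}^\pi$. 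It therefore suffices to prove the one-step identity $\beta_{k+1}=\Phi(\beta_k,u_{t-H+k},y_{t-H+k+1})$ for $0\le k\le H-1$: composing these $H$ updates along the realized pair sequence $(u_{t-H:t-1},y_{t-H+1:t})$ gives precisely $\hat b_t^{(H),\pi}=\Phi_H^\pi(\tilde b_{t-H}^\pi)$, since $\Phi_H^\pi$ is by definition this composition.

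Each one-step identity would be obtained by splitting $\Phi$ into its prediction and correction stages and invoking the conditional-independence structure of the POMDP. For prediction, the controlled-Markov property of $\{x_t\}$ under $P$ gives that, conditional on $x_{t-H+k}$ and on the realized input $u_{t-H+k}$, the state $x_{t-H+k+1}$ is drawn from $P(\cdot\mid x_{t-H+k},u_{t-H+k})$ independently of the earlier window variables; hence the one-step predictive law of $x_{t-H+k+1}$ given $(y_{t-H:\,t-H+k},u_{t-H:\,t-H+k})$ equals $\int_{\mathcal{X}}P(\cdot\mid x,u_{t-H+k})\,\beta_k(dx)$. For correction, $O(y_{t-H+k+1}\mid\cdot)$ depends only on $x_{t-H+k+1}$, so a Bayes update of this predictive law against the likelihood $O(y_{t-H+k+1}\mid\cdot)$ returns $\beta_{k+1}$. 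The composition of the two stages is exactly one application of the belief update operator $\Phi$ along the realized pair $(u_{t-H+k},y_{t-H+k+1})$, which closes the induction.

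The step I expect to be the main obstacle is making the prediction stage rigorous under a \emph{closed-loop} policy. Since $u_s=\pi_s(\zeta_s)$ is measurable with respect to the full history $\zeta_s$, and in particular with respect to the prefix $y_{0:\,t-H-1}$ that truncation discards, conditioning on the realized input block $u_{t-H:\,t-H+k}$ is not a priori innocuous inside a \emph{truncated} recursion, even though it is innocuous for the full-history filter (where each input is a deterministic function of the history already conditioned on). The crux is to verify that the realized inputs enter the recursion only through their \emph{values} in the kernels $P(\cdot\mid\cdot,u_{t-H+k})$, not through the history dependence of $\pi$, so that the truncated posterior obeys the $(P,O)$-filter recursion initialized at $\tilde b_{t-H}^\pi$; this matches the construction of $\hat b_t^{(H),\pi}$ recalled immediately before the lemma. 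I would formalize it by conditioning throughout on the realized input path and peeling off one time index at a time using the POMDP's conditional-independence graph. The remaining manipulations --- pushforward under $P$, normalization against the $O$-likelihood, and the identification of the composed update with $\Phi_H^\pi$ --- are routine.
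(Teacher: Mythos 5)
Your unrolling argument is the same route the paper takes (its proof is a one-line appeal to the Markov property and Bayes' rule), and your induction scaffolding $\beta_0=\tilde b_{t-H}^\pi$, $\beta_H=\hat b_t^{(H),\pi}$ with one-step prediction--correction identities is the right way to make that precise. However, the obstacle you flag in your last paragraph is not a technicality to be "formalized by conditioning on the realized input path" --- it is a genuine gap, and as far as I can see it does not close in the generality the lemma claims. Already at $k=0$ the prediction step needs $\mathbb{P}_\pi(x_{t-H}\mid y_{t-H},u_{t-H})=\mathbb{P}_\pi(x_{t-H}\mid y_{t-H})$. Under a closed-loop policy $u_{t-H}=\pi(\zeta_{t-H})$ is a function of the full prefix $y_{0:t-H}$, which is correlated with $x_{t-H}$; conditioning on the \emph{value} of $u_{t-H}$ therefore leaks information about the discarded prefix and changes the posterior of $x_{t-H}$. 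The same leakage recurs at every later step through $u_{t-H+k}$. This is exactly why the analogous step is harmless for the full-history filter (each $u_s$ is a deterministic function of variables already in the conditioning set) but not for the truncated one, and it is why the one-step identity $\beta_{k+1}=\Phi(\beta_k,u_{t-H+k},y_{t-H+k+1})$ fails for general history-dependent $\pi$.

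Concretely, the identity you are asked to prove holds if the inputs are exogenous (open loop), if $\pi$ is measurable with respect to the truncated window, or --- as the paper's own LQG implementation does --- if $\hat b_t^{(H),\pi}$ is \emph{defined} as the output of the window-restart recursion rather than as the exact conditional $\mathbb{P}_\pi(x_t\mid\zeta_t^{(H)})$ of Section~\ref{sec:setup}. Your proof should either add one of these hypotheses explicitly or observe that the text immediately preceding the lemma silently adopts the constructive definition; without one of these, the "routine" remaining manipulations cannot be completed because the quantity $\beta_k$ you track does not satisfy the filter recursion. Note that this is a defect of the paper's own one-line proof as well, so you have correctly located the weak point --- you just should not describe it as resolvable by bookkeeping.
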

\begin{proof}
By the Markov property of the controlled state process and Bayes' rule, conditioning on $(y_{t-H:t},u_{t-H:t-1})$ is equivalent to conditioning on $y_{t-H}$ to initialize the belief at time $t-H$ and then applying the Bayesian filter recursively along the suffix $(u_{t-H:t-1},y_{t-H+1:t})$.
\end{proof}

\begin{lemma}[Moment bound implies Wasserstein bound]
\label{lem:w2-moment}
If $\mu,\nu\in\mathcal{P}_2(\mathcal{X})$, by (\ref{eq:Wasserstein-metric-space}), there exist $M\in(0,\infty)$ such that $\mathbb{E}_{x\sim\mu}[\|x\|^2] \le M$ and $\mathbb{E}_{x\sim\nu}[\|x\|^2] \le M$, then $W_2(\mu,\nu)\le 2\sqrt{M}$.
\end{lemma}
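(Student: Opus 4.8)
The plan is to bound $W_2(\mu,\nu)$ by comparing both measures to a common reference point, the Dirac mass at the origin, and then invoke the triangle inequality for the metric $W_2$. First I would recall that for any $\mu\in\mathcal{P}_2(\mathcal{X})$ the only admissible coupling between $\mu$ and $\delta_0$ is the product $\mu\otimes\delta_0$, so that $W_2(\mu,\delta_0)^2=\int_{\mathcal{X}}\|x-0\|^2\,\mu(dx)=\mathbb{E}_{x\sim\mu}[\|x\|^2]\le M$, and likewise $W_2(\nu,\delta_0)^2\le M$. Here I should note in passing that $\delta_0$ itself lies in $\mathcal{P}_2(\mathcal{X})$ only in the degenerate sense (its second moment is $0$, which sits on the boundary of the interval in (\ref{eq:Wasserstein-metric-space})); since $W_2$ extends to a genuine metric on the larger space $\mathcal{P}_2'(\mathcal{X})$ of measures with finite second moment, the triangle inequality below is still valid, and I would phrase the argument over that ambient space to avoid any friction with the strict positivity in (\ref{eq:Wasserstein-metric-space}).

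Next I would apply the triangle inequality $W_2(\mu,\nu)\le W_2(\mu,\delta_0)+W_2(\delta_0,\nu)\le \sqrt{M}+\sqrt{M}=2\sqrt{M}$, which is exactly the claimed bound. An equivalent route that avoids the Dirac mass altogether is to use the product coupling $\gamma=\mu\otimes\nu\in\Pi(\mu,\nu)$ directly, giving $W_2(\mu,\nu)^2\le\int\!\int\|x-y\|^2\,\mu(dx)\nu(dy)\le\int\!\int\bigl(2\|x\|^2+2\|y\|^2\bigr)\,\mu(dx)\nu(dy)=2\,\mathbb{E}_{x\sim\mu}[\|x\|^2]+2\,\mathbb{E}_{y\sim\nu}[\|y\|^2]\le 4M$, whence $W_2(\mu,\nu)\le 2\sqrt{M}$. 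I would likely present the product-coupling version in the paper since it is fully self-contained and requires only the elementary inequality $\|x-y\|^2\le 2\|x\|^2+2\|y\|^2$ together with the variational definition of $W_2$ as an infimum over couplings.

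There is essentially no substantive obstacle here: both arguments are two lines, and the only point demanding a word of care is the membership/metric-space technicality around $\delta_0$ and the open interval $(0,\infty)$ in (\ref{eq:Wasserstein-metric-space}), which the product-coupling proof sidesteps entirely. The lemma will be used downstream as a crude uniform ceiling on $W_2(b_t^\pi,\hat b_t^{(H),\pi})$ for small $H$ (in particular $H=0$), complementing the exponential decay coming from Assumption~\ref{assump:controlled-forgetting}, so a constant of the form $2\sqrt{M}$ with $M$ the uniform second-moment bound on $\Pi_{\mathrm{stab}}$ is exactly what is needed; I would not attempt to optimize the constant.
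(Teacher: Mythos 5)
Your proposal is correct, and the product-coupling argument you say you would present --- taking $X\sim\mu$, $Y\sim\nu$ independent and using $\|x-y\|^2\le 2\|x\|^2+2\|y\|^2$ to get $W_2(\mu,\nu)^2\le 4M$ --- is exactly the paper's proof. The alternative route through $\delta_0$ and the triangle inequality is also fine (and your remark about the strict positivity in the definition of $\mathcal{P}_2(\mathcal{X})$ is a fair technical observation), but it adds nothing here.
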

\begin{proof}
Let $X\sim\mu$ and $Y\sim\nu$ be independent.
Then, by properties of $W_2$ metric, $W_2(\mu,\nu)^2 \le \mathbb{E}[\|X-Y\|^2] \le 2\mathbb{E}[\|X\|]^2+2\mathbb{E}[\|Y\|^2] \le 4M$.
\end{proof}

\begin{lemma}[Forgetting implies finite memory accuracy]
\label{lem:forgetting-implies-eps}
Suppose Assumption~\ref{assump:controlled-forgetting} holds for some $\pi\in\Pi_{\mathrm{stab}}$.
Then there exists a constant $C'_\pi\in(0,\infty)$ such that for all $H\ge 0$,
\begin{equation}
\varepsilon_H(\pi)\le C'_\pi\,\rho^H.
\end{equation}
\end{lemma}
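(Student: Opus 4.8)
The plan is to reduce the finite memory error at time $t$ to a single application of Assumption~\ref{assump:controlled-forgetting} over the window $[t-H,t]$, with the two competing initializations being the true belief $b_{t-H}^\pi$ and the boundary belief $\tilde b_{t-H}^\pi$, and then to discharge the resulting prefactor $W_2(b_{t-H}^\pi,\tilde b_{t-H}^\pi)$ using the uniform second-moment bound available on $\Pi_{\mathrm{stab}}$ together with Lemma~\ref{lem:w2-moment}.

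First I would dispose of the trivial regime. For $t\le H$ the truncated history $\zeta_t^{(H)}$ coincides with the full history $\zeta_t$, so $\hat b_t^{(H),\pi}=b_t^\pi$ and $W_2(b_t^\pi,\hat b_t^{(H),\pi})=0$; hence only $t>H$ is at issue. For $t>H$, iterating the Bayesian filter $b_{s+1}=\Phi(b_s,u_s,y_{s+1})$ from $s=t-H$ to $s=t-1$ along the realized closed-loop suffix $(u_{t-H:t-1},y_{t-H+1:t})$ gives the composition identity $b_t^\pi=\Phi_H^\pi(b_{t-H}^\pi)$, while Lemma~\ref{lem:window-representation} gives $\hat b_t^{(H),\pi}=\Phi_H^\pi(\tilde b_{t-H}^\pi)$ along the \emph{same} suffix. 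Thus $b_t^\pi$ and $\hat b_t^{(H),\pi}$ are precisely the $H$-step belief updates of two different initializations at time $t-H$ driven by a common realized IO suffix, which is exactly the object controlled by Assumption~\ref{assump:controlled-forgetting}.

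Next I would apply Assumption~\ref{assump:controlled-forgetting} in its time-shifted form over $[t-H,t]$. Conditioning on $\zeta_{t-H}$ freezes both $b_{t-H}^\pi$ and $\tilde b_{t-H}^\pi=\mathbb{P}_\pi(x_{t-H}\mid y_{t-H})$ into deterministic elements of $\mathcal{P}_2(\mathcal{X})$, and the regular conditional law of the suffix under $\pi$ plays the role of the realized IO sequence in the assumption, so that
\[
\mathbb{E}_\pi\!\big[\,W_2(b_t^\pi,\hat b_t^{(H),\pi})\ \big|\ \zeta_{t-H}\,\big]\ \le\ C_\pi\,\rho^H\,W_2\!\big(b_{t-H}^\pi,\tilde b_{t-H}^\pi\big).
\]
Since both $b_{t-H}^\pi$ and $\tilde b_{t-H}^\pi$ are posteriors of $x_{t-H}$, the uniform second-moment property of $\Pi_{\mathrm{stab}}$ bounds each second moment by the constant $M$ of Lemma~\ref{lem:w2-moment}, so Lemma~\ref{lem:w2-moment} gives $W_2(b_{t-H}^\pi,\tilde b_{t-H}^\pi)\le 2\sqrt{M}$ (equivalently, one bounds $\mathbb{E}_\pi[W_2(b_{t-H}^\pi,\tilde b_{t-H}^\pi)^2]\le 4\,\mathbb{E}\|x_{t-H}\|^2\le 4M$ by an independent coupling and Jensen). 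Taking expectations then yields $\mathbb{E}_\pi[W_2(b_t^\pi,\hat b_t^{(H),\pi})]\le 2\sqrt{M}\,C_\pi\,\rho^H$ for every $t>H$, a bound that is vacuously valid for $t\le H$; taking the supremum over $t$ in (\ref{eq:belief-approx-error}) gives $\varepsilon_H(\pi)\le C'_\pi\rho^H$ with $C'_\pi:=2\sqrt{M}\,C_\pi$.

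The main obstacle I anticipate is the conditioning step: Assumption~\ref{assump:controlled-forgetting} is phrased for \emph{deterministic} initial beliefs with conditioning on the input trajectory, whereas here the initializations $b_{t-H}^\pi,\tilde b_{t-H}^\pi$ are random and the closed-loop inputs $u_{t-H:t-1}$ are not exogenous but depend on intervening observations through $\pi$. Making the ``freeze $\zeta_{t-H}$, apply the assumption to the regular conditional distribution, integrate'' argument rigorous requires that $\Phi_t^\pi$ in the assumption is genuinely the closed-loop filter recursion rather than an open-loop one, which is how the assumption is intended; I would state this compatibility explicitly. Everything else -- the filter composition identity, the trivial regime $t\le H$, and the moment bound -- is routine given Lemmas~\ref{lem:window-representation} and~\ref{lem:w2-moment}.
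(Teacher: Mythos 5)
Your proposal is correct and follows essentially the same route as the paper's proof: represent both $b_t^\pi$ and $\hat b_t^{(H),\pi}$ as the same $H$-step filter applied to the two initializations $b_{t-H}^\pi$ and $\tilde b_{t-H}^\pi$, invoke Assumption~\ref{assump:controlled-forgetting} conditionally over that window, and discharge the prefactor $W_2(b_{t-H}^\pi,\tilde b_{t-H}^\pi)$ via the uniform second-moment bound and Lemma~\ref{lem:w2-moment}. Your handling of the $t\le H$ regime (noting the error vanishes because the truncated window covers the entire history) is in fact cleaner than the paper's absorption argument, which as written cannot yield an $H$-independent constant in that regime.
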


\begin{proof}
Fix $H\ge0$ and $t\ge H$.
By Lemma~\ref{lem:window-representation}, both $b_t^\pi$ and $\hat b_t^{(H),\pi}$ are obtained by applying the same $H$-step belief update to initial beliefs $b_{t-H}^\pi$ and $\tilde b_{t-H}^\pi$, respectively, along the same realized IO sequence.
Applying Assumption~\ref{assump:controlled-forgetting} conditional on this window yields
\begin{equation}
\mathbb{E}_\pi\!\left[
W_2\!\left(b_t^\pi,\hat b_t^{(H),\pi}\right)
\right]
\le
C_\pi\,\rho^H\,
\mathbb{E}_\pi\!\left[
W_2\!\left(b_{t-H}^\pi,\tilde b_{t-H}^\pi\right)
\right].
\end{equation}
Since $\pi\in\Pi_{\mathrm{stab}}$, both beliefs $b_t^\pi$ and $\hat b_t^{(H),\pi}$ have uniformly bounded second moments, so Lemma~\ref{lem:w2-moment} implies $\sup_t\mathbb{E}_\pi[W_2(b_{t-H}^\pi,\tilde b_{t-H}^\pi)]\in (0, \infty)$.
Absorbing this bound into the constant yields the claimed inequality for all $t\ge H$.
For $t<H$, the same moment bound applies and $\rho^H\le1$ allows absorption into the same constant.
Taking the supremum over $t\ge0$ completes the proof.
\end{proof}

Before proceeding to performance guarantees, we clarify that finite memory must retain both observation and input histories. Otherwise, belief reconstruction fails even in simple controlled systems.

\begin{proposition}[Necessity of input history]
\label{prop:input-necessity}
There exist POSOC systems for which no controller depending only on a finite observation window $(y_{t-H:t})$ can uniquely determine the posterior $\mathbb{P}(x_t\mid y_{t-H:t})$ independently of past inputs $(u_{t-H:t-1})$.
\end{proposition}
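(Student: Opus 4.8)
The plan is to exhibit a concrete, minimal counterexample rather than argue abstractly, since the claim is an existence statement. I would take the simplest controlled linear-Gaussian scalar system in which the input genuinely moves the state: $x_{t+1} = x_t + u_t + w_t$ with $w_t \sim \mathcal{N}(0,\sigma_w^2)$, and a trivial or highly degenerate observation, e.g. $y_t = w_{t-1}$-type noise that carries no information about $x_t$, or more cleanly $y_t \sim \mathcal{N}(0,\sigma_v^2)$ independent of the state. With such an observation channel the posterior $\mathbb{P}(x_t \mid y_{t-H:t})$ is, after marginalizing, a Gaussian whose mean is $x_{t-H-1}$'s prior mean shifted by $\sum_{k=t-H}^{t-1} u_k$ and whose variance grows with $H$; the key point is that the mean depends affinely on the realized inputs $u_{t-H:t-1}$, so two different input sequences producing the same observation window yield different posteriors.

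The key steps, in order, are: (i) fix the system above and note that since $O(y\mid x)$ does not depend on $x$, the observations are ancillary and the conditional law $\mathbb{P}(x_t \mid y_{t-H:t}, u_{t-H:t-1})$ reduces to the pushforward of the prior on $x_{t-H-1}$ (or on $x_{t-H}$ from the boundary belief $\tilde b_{t-H}$) through the deterministic-plus-noise map $x_t = x_{t-H} + \sum_{k=t-H}^{t-1} u_k + \sum_{k=t-H}^{t-1} w_k$; (ii) compute that the posterior mean is $m_{t-H} + \sum_{k=t-H}^{t-1} u_k$ where $m_{t-H}$ is the prior mean, exhibiting explicit dependence on the input sequence; (iii) choose two input sequences $u_{t-H:t-1}$ and $u'_{t-H:t-1}$ with distinct sums $\sum u_k \neq \sum u'_k$ — both of which are consistent with any given observation window, precisely because observations are input-independent here — and conclude that the resulting posteriors are distinct Gaussians, so no map from $(y_{t-H:t})$ alone can recover $\mathbb{P}(x_t\mid \zeta_t^{(H)})$. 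This directly contradicts the possibility claimed impossible in the statement, establishing the proposition.

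I would then remark briefly that the construction is robust: one does not need the fully ancillary observation; it suffices that the observation channel is not injective in $x$ over the relevant range, or more generally that the observation noise has heavy enough tails that a fixed observation window is compatible with a nontrivial set of input trajectories. A one-line strengthening notes that even an informative-but-imperfect sensor leaves a residual input-dependent bias in the conditional mean, so the phenomenon is generic rather than pathological; this also foreshadows why the LQG closed-form analysis in Section~\ref{sec:lqg} must track the input sequence explicitly inside the boundary belief.

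The main obstacle I anticipate is not the algebra but the bookkeeping of what "independently of past inputs" means formally: one must phrase the impossibility as the non-existence of a measurable map $g$ with $g(y_{t-H:t}) = \mathbb{P}(x_t \mid y_{t-H:t}, u_{t-H:t-1})$ holding for all admissible input realizations, and then defeat it by producing two input realizations sharing an observation-window value but yielding different posteriors. Making the quantifiers precise — in particular ensuring the two input sequences are simultaneously realizable under a common (even open-loop) policy and a common observation realization — is the only delicate point, and it is handled cleanly by the input-independence of $O$ in the chosen example.
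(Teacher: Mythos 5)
The paper states Proposition~\ref{prop:input-necessity} without any proof in the main text, so there is no author argument to compare yours against; I can only assess your construction on its own terms, and it is correct. Taking $x_{t+1}=x_t+u_t+w_t$ with a state-independent observation kernel is a legitimate (if extreme) instance of the POMDP class defined in Section~\ref{sec:setup}, and since the proposition is a pure existence claim, a degenerate example suffices. Your computation is right: with ancillary observations the window posterior is Gaussian with mean $m_{t-H}+\sum_{k=t-H}^{t-1}u_k$, so it varies with the in-window inputs while the observation window carries no information that could distinguish the input sequences; hence the posterior is not $\sigma(y_{t-H:t})$-measurable. You also correctly identify the only delicate point, namely that the two input sequences must be simultaneously realizable with a common observation realization under a single closed-loop law; an open-loop randomized policy (inputs drawn independently of everything) handles this, and the input-independence of $O$ makes the product structure of the joint law of $(y_{t-H:t},u_{t-H:t-1})$ explicit. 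Your closing remark that the phenomenon persists for informative sensors (the Kalman prediction step injects $\sum u_k$ into the conditional mean regardless of $C$) is a genuine strengthening consistent with the LQG construction in Section~\ref{sec:lqg}, where the window-restart filter explicitly consumes $(u_{s:s+H-1})$. No gaps.
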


Section~\ref{sec:guarantees} uses only the quantity $\varepsilon_H(\pi)$ and the exponential decay established above to convert information loss into a performance loss bound, without invoking any global regularity of the Bellman operator or the value function.


\section{Performance Guarantees of a Finite Memory Belief Approximation-Based Policy}
\label{sec:guarantees}

In this section, we bound the performance loss induced by finite memory belief approximation via a policy-conditional belief mismatch evaluated under a fixed policy.
We compare two cost functionals evaluated under the same closed-loop execution induced by a single policy and differing only in the belief argument inside the belief-level stage cost.
\footnote{Policy induced by belief approximation is always suboptimal comparing to one induced by true belief. Cost comparison under same policy and inputs provides intermediate step for actual comparison between $J^\star$ and $\hat{J}_H(\pi_H^\star)$.}
This avoids comparing two different closed-loop trajectories and does not invoke any global regularity of the Bellman operator or the value function.

Fix a memory length $H\ge 0$ and a belief-based policy $\pi$.
The processes $\{b_t^\pi\}$ and $\{\hat b_t^{(H),\pi}\}$ are defined on the same probability space and are coupled through the same realized IO sequence.
Throughout this section, as mentioned in the footnote, we evaluate both cost functionals along the same realized input sequence $u_t=\pi(b_t^\pi)$, and we only change the belief parameter inside the cost $\bar c(\cdot,u_t)$.

We define the true-belief cost functional under $\pi$ by
\begin{equation}
\label{eq:true-belief-cost}
J(\pi) := \mathbb{E}_\pi\!\left[ \sum_{t=0}^\infty \gamma^t \bar c\!\left(b_t^\pi,u_t\right) \right],
\end{equation}
and define the finite memory belief approximation cost functional under the same $\pi$ and inputs $u_t=\pi(b_t^\pi)$ by
\begin{equation}
\label{eq:approx-belief-cost}
\hat J_H(\pi) := \mathbb{E}_\pi\!\left[ \sum_{t=0}^\infty \gamma^t \bar c\!\left(\hat b_t^{(H),\pi},u_t\right) \right].
\end{equation}
In particular, $J(\pi)$ and $\hat J_H(\pi)$ are evaluated under the same distribution over $(x_{0:\infty},y_{0:\infty},u_{0:\infty})$ induced by $\pi$, and they differ only by replacing $b_t^\pi$ with $\hat b_t^{(H),\pi}$ inside $\bar c(\cdot,u_t)$.

Recall the finite memory belief mismatch under $\pi$ by the policy-conditional error, defined in (\ref{eq:belief-approx-error}).
The remainder of this section shows how the belief mismatch $\varepsilon_H(\pi)$ translates into a performance gap. Section~\ref{sec:finite-memory} is used only to upper bound $\varepsilon_H(\pi)$ as a function of $H$.

To convert belief mismatch into a quantitative performance bound under a fixed closed-loop execution, we impose mild policy-conditional regularity conditions ensuring finiteness of moments and local smoothness of the belief-level cost.
\begin{assumption}[Policy-conditional state regularity]
\label{assump:state-moment-regularity}
For a fixed admissible policy $\pi$, there exists a constant $M_\pi\in (0, \infty)$ for both measure $b_t^\pi$ and $\hat{b}_t^{(H),\pi}$ such that
\begin{equation}
\sup_{t\ge 0}\,
\mathbb{E}_{x\sim b_t^\pi}\!\left[\|x\|^2\right] \le M_\pi, \quad
\sup_{t\ge 0}\, \mathbb{E}_{x\sim \hat b_t^{(H),\pi}}\!\left[\|x\|^2\right] \le M_\pi.
\end{equation}
\end{assumption}

\begin{assumption}[Quadratic growth and smoothness of cost]
\label{assump:cost-regularity}
There exists a constant $K_c>0$ and $K_g>0$ such that for all $(x,u)\in\mathcal{X}\times\mathcal{U}$,
\begin{align}
|c(x,u)| \le K_c\left(1+\|x\|^2+\|u\|^2\right), \\
\|\nabla_x c(x,u)\| \le K_g\left(1+\|x\|+\|u\|\right).
\end{align}
\end{assumption}

Assumption~\ref{assump:state-moment-regularity} ensures finiteness of the constants below under the fixed closed-loop induced by $\pi$.
Assumption~\ref{assump:cost-regularity} is compatible with quadratic costs and it does not require $c$ to be globally Lipschitz in $x$.

\begin{lemma}[Belief-level cost sensitivity under $W_2$]
\label{lem:cost-sensitivity}
Suppose Assumptions~\ref{assump:state-moment-regularity} and~\ref{assump:cost-regularity} hold for a fixed policy $\pi$.
Then there exists a finite constant $L_\pi\in (0, \infty)$ such that for any $u\in\mathcal{U}$ and any $b,\tilde b\in\mathcal{P}_2(\mathcal{X})$ satisfying
$\mathbb{E}_{x\sim b}[\|x\|^2] \le M_\pi$ and
$\mathbb{E}_{x\sim \tilde b}[\|x\|^2] \le M_\pi$,
\begin{equation}
\left|\bar c(b,u)-\bar c(\tilde b,u)\right|
\le
L_\pi\left(1+\|u\|^2\right)W_2(b,\tilde b).
\end{equation}
\end{lemma}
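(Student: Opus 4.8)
The plan is to write the difference of belief-level costs as an integral of the pointwise cost difference against an optimal $W_2$-coupling of $b$ and $\tilde b$, and then control the integrand using the gradient growth bound in Assumption~\ref{assump:cost-regularity}. First I would fix $u\in\mathcal{U}$ and $b,\tilde b\in\mathcal{P}_2(\mathcal{X})$ with $\mathbb{E}_{x\sim b}[\|x\|^2]\le M_\pi$ and $\mathbb{E}_{x\sim\tilde b}[\|x\|^2]\le M_\pi$. Since $\mathcal{X}$ is a Borel subset of $\mathbb{R}^n$, there is an optimal transport plan $\gamma\in\mathcal{P}(\mathcal{X}\times\mathcal{X})$ with marginals $b,\tilde b$ and $\int\|x-y\|^2\,\gamma(dx,dy)=W_2(b,\tilde b)^2$. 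The quadratic growth bound together with the moment bounds gives $\mathbb{E}_{x\sim b}[|c(x,u)|],\ \mathbb{E}_{x\sim\tilde b}[|c(x,u)|]\le K_c(1+M_\pi+\|u\|^2)<\infty$, so $\bar c(b,u)$ and $\bar c(\tilde b,u)$ are finite and $\bar c(b,u)-\bar c(\tilde b,u)=\int\bigl(c(x,u)-c(y,u)\bigr)\gamma(dx,dy)$.

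Next I would bound the integrand. For fixed $u$, the fundamental theorem of calculus gives $c(x,u)-c(y,u)=\int_0^1\langle\nabla_x c(y+s(x-y),u),\,x-y\rangle\,ds$, which is justified because Assumption~\ref{assump:cost-regularity} posits a locally bounded gradient, so $c(\cdot,u)$ is absolutely continuous along segments. Using convexity of the norm, $\|y+s(x-y)\|\le\|x\|+\|y\|$ for all $s\in[0,1]$, and the gradient growth bound then yields the pointwise estimate $|c(x,u)-c(y,u)|\le K_g\bigl(1+\|x\|+\|y\|+\|u\|\bigr)\|x-y\|$. Integrating against $\gamma$ and applying Cauchy--Schwarz gives
\begin{equation}
\bigl|\bar c(b,u)-\bar c(\tilde b,u)\bigr|
\le
K_g\left(\int\bigl(1+\|x\|+\|y\|+\|u\|\bigr)^2\gamma(dx,dy)\right)^{1/2} W_2(b,\tilde b).
\end{equation}

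Then I would close the bound by estimating the first factor. Using $(1+\|x\|+\|y\|+\|u\|)^2\le 4(1+\|x\|^2+\|y\|^2+\|u\|^2)$ and integrating with the marginal second-moment bounds, $\int(\cdots)\gamma(dx,dy)\le 4(1+2M_\pi+\|u\|^2)$. Finally, $\sqrt{1+2M_\pi+\|u\|^2}\le\sqrt{(1+2M_\pi)(1+\|u\|^2)}\le\sqrt{1+2M_\pi}\,(1+\|u\|^2)$, since $\sqrt{1+\|u\|^2}\le 1+\|u\|^2$. Collecting constants gives the claim with $L_\pi:=2K_g\sqrt{1+2M_\pi}$, which is finite by Assumption~\ref{assump:state-moment-regularity}.

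The analytic content is essentially routine; the only points requiring care are the fundamental-theorem-of-calculus step (handled by the locally bounded gradient from Assumption~\ref{assump:cost-regularity}, so no global Lipschitz continuity is needed) and keeping the input dependence as a clean multiplicative factor $(1+\|u\|^2)$ rather than something that would require controlling $\mathbb{E}[\|u\|^2]$. The latter works precisely because $u$ is held fixed inside the coupling integral, so $\|u\|$ pulls out as a constant; this is the main thing to be careful about when assembling the constants.
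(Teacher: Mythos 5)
Your proof is correct and follows essentially the same route as the paper's: couple $b$ and $\tilde b$, apply the fundamental theorem of calculus along line segments, bound the gradient via Assumption~\ref{assump:cost-regularity}, and close with Cauchy--Schwarz and the second-moment bounds. The only differences are bookkeeping ones --- you work directly with an optimal coupling rather than an arbitrary coupling followed by an infimum, and you assemble the constants slightly differently --- neither of which changes the substance of the argument.
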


\begin{proof}
Fix $u\in\mathcal{U}$ and $b,\tilde b\in\mathcal{P}_2(\mathcal{X})$ satisfying the stated second-moment bounds.
Let $(X,\tilde X)$ be any coupling of $(b,\tilde b)$.
Define $X_\lambda:=\tilde X+\lambda(X-\tilde X)$ for $\lambda\in[0,1]$.
By the fundamental theorem of calculus,
\begin{equation}
c(X,u)-c(\tilde X,u)
=
\int_0^1 \nabla_x c(X_\lambda,u)^\top (X-\tilde X)\,d\lambda.
\end{equation}
Taking absolute values and applying Cauchy-Schwarz yields
\begin{equation}
|c(X,u)-c(\tilde X,u)|
\le
\left(\int_0^1 \|\nabla_x c(X_\lambda,u)\|\,d\lambda\right)\|X-\tilde X\|.
\end{equation}
Taking expectation and applying Cauchy-Schwarz gives
\begin{multline}
\mathbb{E}\!\left[|c(X,u)-c(\tilde X,u)|\right]
\le \\
\left(
\mathbb{E}\!\left[\left(\int_0^1 \|\nabla_x c(X_\lambda,u)\|\,d\lambda\right)^2\right]
\right)^{\!1/2}
\left(\mathbb{E}\|X-\tilde X\|^2\right)^{\!1/2}.
\end{multline}
By Assumption~\ref{assump:cost-regularity},
\begin{equation}
\|\nabla_x c(X_\lambda,u)\|
\le
K_g\left(1+\|X_\lambda\|+\|u\|\right).
\end{equation}
Moreover, $\|X_\lambda\|\le \|\tilde X\|+\|X-\tilde X\|$ implies
\begin{equation}
1+\|X_\lambda\|+\|u\|
\le
1+\|\tilde X\|+\|X-\tilde X\|+\|u\|.
\end{equation}
By Jensen's inequality,
\begin{equation}
\left(\int_0^1 \|\nabla_x c(X_\lambda,u)\|\,d\lambda\right)^2
\le
\int_0^1 \|\nabla_x c(X_\lambda,u)\|^2\,d\lambda,
\end{equation}
and hence
\begin{multline}
\mathbb{E}\!\left[\left(\int_0^1 \|\nabla_x c(X_\lambda,u)\|\,d\lambda\right)^2\right]
\le \\
K_g^2\,
\mathbb{E}\!\left[
\int_0^1 \left(1+\|X_\lambda\|+\|u\|\right)^2 d\lambda
\right].
\end{multline}
Using $(a+b+c)^2\le 3(a^2+b^2+c^2)$ and the bound on $\|X_\lambda\|$,
\begin{multline}
\left(1+\|X_\lambda\|+\|u\|\right)^2
\le
3\left(1+\|u\|^2+\|X_\lambda\|^2\right)
\le \\
3\left(1+\|u\|^2+2\|\tilde X\|^2+2\|X-\tilde X\|^2\right).
\end{multline}
Therefore,
\begin{multline}
\label{eq:sec-4-eq-27}
\mathbb{E}\!\left[\left(\int_0^1 \|\nabla_x c(X_\lambda,u)\|\,d\lambda\right)^2\right]
\le
3K_g^2\left(1+\|u\|^2\right)
+ \\
6K_g^2\,\mathbb{E}[\|\tilde X\|^2]
+
6K_g^2\,\mathbb{E}[\|X-\tilde X\|^2].
\end{multline}
Since $X\sim b$ and $\tilde X\sim \tilde b$ satisfy $\mathbb{E}[\|X\|^2] \le M_\pi$ and $\mathbb{E}[\|\tilde X\|^2] \le M_\pi$, we also have
\begin{equation}
\label{eq:sec-4-eq-28}
\mathbb{E}[\|X-\tilde X\|^2] \le 2\mathbb{E}[\|X\|^2] + 2\mathbb{E}[\|\tilde X\|]^2 \le 4M_\pi.
\end{equation}
Combining (\ref{eq:sec-4-eq-27}) and (\ref{eq:sec-4-eq-28}) yields the uniform bound
\begin{equation}
\mathbb{E}\!\left[\left(\int_0^1 \|\nabla_x c(X_\lambda,u)\|\,d\lambda\right)^2\right]
\le
\tilde L_\pi^2\left(1+\|u\|^2\right),
\end{equation}
where one may take $\tilde L_\pi:=K_g\sqrt{3+24M_\pi}$.
Substituting into the Cauchy-Schwarz bound gives
\begin{multline}
\mathbb{E}\!\left[|c(X,u)-c(\tilde X,u)|\right]
\le \\
\tilde L_\pi\left(1+\|u\|^2\right)^{1/2}\left(\mathbb{E}[\|X-\tilde X\|^2]\right)^{1/2}.
\end{multline}
Since $\left|\bar c(b,u)-\bar c(\tilde b,u)\right|\le \mathbb{E}\!\left[|c(X,u)-c(\tilde X,u)|\right]$ and $(1+\|u\|^2)^{1/2}\le 1+\|u\|^2$, we obtain
\begin{equation}
\left|\bar c(b,u)-\bar c(\tilde b,u)\right|
\le
\tilde L_\pi\left(1+\|u\|^2\right)\left(\mathbb{E}\|X-\tilde X\|^2\right)^{1/2}.
\end{equation}
Taking the infimum over all couplings $(X,\tilde X)$ yields
\begin{equation}
\left|\bar c(b,u)-\bar c(\tilde b,u)\right|
\le
\tilde L_\pi\left(1+\|u\|^2\right)W_2(b,\tilde b).
\end{equation}
Setting $L_\pi:=\tilde L_\pi$ completes the proof.
\end{proof}

\begin{lemma}[Fixed-policy performance mismatch]
\label{lem:fixed-policy-mismatch}
Suppose again Assumptions~\ref{assump:state-moment-regularity} and~\ref{assump:cost-regularity} hold for a fixed policy $\pi$ and suppose additionally that $\sup_{t\ge 0}\mathbb{E}_\pi[\|u_t\|^2]\in (0, \infty)$ for the closed-loop inputs $u_t=\pi(b_t^\pi)$.
Assume further that there exists a constant $U_\pi\in (0, \infty)$ such that $\|u_t\|^2 \le U_\pi \quad\text{a.s. for all } t\ge 0$. Bounded input assumption is reasonable in most optimal control problems.
Then there exists a finite constant $C_\pi\in (0, \infty)$ such that for all $H\ge 0$,
\begin{equation}
\left|J(\pi)-\hat J_H(\pi)\right| \le \frac{C_\pi}{1-\gamma}\,\varepsilon_H(\pi).
\end{equation}
\end{lemma}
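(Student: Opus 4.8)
The plan is to expand $J(\pi)-\hat J_H(\pi)$ term by term, bound each stagewise discrepancy by the cost-sensitivity estimate of Lemma~\ref{lem:cost-sensitivity}, and then sum the resulting geometric series. First I would observe that, since $J(\pi)$ and $\hat J_H(\pi)$ are evaluated under the \emph{same} closed-loop law over $(x_{0:\infty},y_{0:\infty},u_{0:\infty})$ and differ only in the belief argument of $\bar c(\cdot,u_t)$, both series are absolutely summable: by Assumption~\ref{assump:cost-regularity}, $|\bar c(b,u)|\le K_c(1+\mathbb{E}_{x\sim b}[\|x\|^2]+\|u\|^2)$, so Assumption~\ref{assump:state-moment-regularity} together with the almost-sure input bound $\|u_t\|^2\le U_\pi$ gives $\mathbb{E}_\pi[|\bar c(b_t^\pi,u_t)|]\le K_c(1+M_\pi+U_\pi)$ uniformly in $t$, and likewise with $\hat b_t^{(H),\pi}$ in place of $b_t^\pi$. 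Hence the discounted series are dominated by a convergent one, Fubini--Tonelli applies, and
\begin{equation}
\left|J(\pi)-\hat J_H(\pi)\right|
\le
\sum_{t=0}^{\infty}\gamma^t\,
\mathbb{E}_\pi\!\left[\left|\bar c\!\left(b_t^\pi,u_t\right)-\bar c\!\left(\hat b_t^{(H),\pi},u_t\right)\right|\right].
\end{equation}

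Next I would control each term. Fix $t$; for almost every realization of the IO history, Assumption~\ref{assump:state-moment-regularity} guarantees that $b_t^\pi$ and $\hat b_t^{(H),\pi}$ lie in $\mathcal{P}_2(\mathcal{X})$ with second moments at most $M_\pi$, so Lemma~\ref{lem:cost-sensitivity} applies conditionally on that realization and yields, almost surely,
\begin{equation}
\left|\bar c\!\left(b_t^\pi,u_t\right)-\bar c\!\left(\hat b_t^{(H),\pi},u_t\right)\right|
\le
L_\pi\left(1+\|u_t\|^2\right)W_2\!\left(b_t^\pi,\hat b_t^{(H),\pi}\right).
\end{equation}
Using $\|u_t\|^2\le U_\pi$ almost surely to replace the input factor by $1+U_\pi$ and taking expectations,
\begin{equation}
\mathbb{E}_\pi\!\left[\left|\bar c\!\left(b_t^\pi,u_t\right)-\bar c\!\left(\hat b_t^{(H),\pi},u_t\right)\right|\right]
\le
L_\pi(1+U_\pi)\,\varepsilon_H(\pi),
\end{equation}
where the last step uses $\mathbb{E}_\pi[W_2(b_t^\pi,\hat b_t^{(H),\pi})]\le\varepsilon_H(\pi)$ from (\ref{eq:belief-approx-error}). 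Substituting into the previous display and summing $\sum_{t\ge 0}\gamma^t=1/(1-\gamma)$ gives $|J(\pi)-\hat J_H(\pi)|\le \tfrac{L_\pi(1+U_\pi)}{1-\gamma}\,\varepsilon_H(\pi)$, i.e.\ the claim with $C_\pi:=L_\pi(1+U_\pi)$.

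The argument is mostly bookkeeping, so I do not expect a serious obstacle; the two points that need care are (i) the interchange of the infinite sum and the expectation, which is exactly why Assumptions~\ref{assump:state-moment-regularity}--\ref{assump:cost-regularity} and the bounded-input hypothesis are invoked, namely to produce a uniform-in-$t$ integrable dominating term, and (ii) the fact that Lemma~\ref{lem:cost-sensitivity} is stated for deterministic measures while $b_t^\pi$ and $\hat b_t^{(H),\pi}$ are random --- this is harmless because the constant $L_\pi$ there depends on the two measures only through the shared second-moment ceiling $M_\pi$, so the estimate holds realization by realization and is preserved under $\mathbb{E}_\pi$. A minor alternative, avoiding the almost-sure bound $U_\pi$ and using only $\sup_t\mathbb{E}_\pi[\|u_t\|^2]<\infty$, would apply Cauchy--Schwarz to $\mathbb{E}_\pi[(1+\|u_t\|^2)W_2(b_t^\pi,\hat b_t^{(H),\pi})]$ together with the almost-sure bound $W_2\le 2\sqrt{M_\pi}$ from Lemma~\ref{lem:w2-moment}, at the cost of a slightly larger constant.
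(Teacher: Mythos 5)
Your proposal is correct and follows essentially the same route as the paper's proof: triangle inequality on the discounted sum, the sensitivity bound of Lemma~\ref{lem:cost-sensitivity} with the almost-sure input bound to extract the factor $L_\pi(1+U_\pi)$, the definition of $\varepsilon_H(\pi)$, and summation of the geometric series, arriving at the same constant $C_\pi=L_\pi(1+U_\pi)$. Your added care about the sum--expectation interchange and about applying Lemma~\ref{lem:cost-sensitivity} realization by realization to the random measures is a welcome tightening of details the paper leaves implicit, but it does not change the argument.
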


\begin{proof}
By the definitions of $J(\pi)$ and $\hat J_H(\pi)$ (See (\ref{eq:true-belief-cost}) and (\ref{eq:approx-belief-cost})) and the triangle inequality,
\begin{equation}
\left|J(\pi)-\hat J_H(\pi)\right|
\le
\sum_{t=0}^{\infty}
\gamma^t\,
\mathbb{E}_\pi\!\left[
\left|\bar c(b_t^\pi,u_t)-\bar c(\hat b_t^{(H),\pi},u_t)\right|
\right].
\end{equation}
By Lemma~\ref{lem:cost-sensitivity},
\begin{equation}
\left|\bar c(b_t^\pi,u_t)-\bar c(\hat b_t^{(H),\pi},u_t)\right|
\le
L_\pi\left(1+\|u_t\|^2\right)
W_2\!\left(b_t^\pi,\hat b_t^{(H),\pi}\right).
\end{equation}
Taking expectations yields
\begin{multline}
\mathbb{E}_\pi\!\left[
\left|\bar c(b_t^\pi,u_t)-\bar c(\hat b_t^{(H),\pi},u_t)\right|
\right]
\le \\
L_\pi\,
\mathbb{E}_\pi\!\left[
\left(1+\|u_t\|^2\right)
W_2\!\left(b_t^\pi,\hat b_t^{(H),\pi}\right)
\right].
\end{multline}
By uniform input boundedness, we have
\begin{multline}
\mathbb{E}_\pi\!\left[
\left(1+\|u_t\|^2\right)
W_2\!\left(b_t^\pi,\hat b_t^{(H),\pi}\right)
\right]
\le \\
\left(1+U_\pi\right)
\mathbb{E}_\pi\!\left[
W_2\!\left(b_t^\pi,\hat b_t^{(H),\pi}\right)
\right]
\le
\left(1+U_\pi\right)\varepsilon_H(\pi).
\end{multline}
Combining the above inequalities and summing over $t\ge 0$ yields
\begin{equation}
\left|J(\pi)-\hat J_H(\pi)\right|
\le
L_\pi\left(1+U_\pi\right)
\varepsilon_H(\pi)
\sum_{t=0}^{\infty}\gamma^t
=
\frac{C_\pi}{1-\gamma}\,\varepsilon_H(\pi),
\end{equation}
where $C_\pi:=L_\pi\left(1+U_\pi\right)$.
\end{proof}

We now lift the fixed-policy mismatch bound to the optimal value gap bound between the belief-optimal controller and the optimal finite memory controller.
Let $\pi^\star$ denote an optimal policy for the belief-MDP and let $\pi_H^\star$ denote an optimal policy among finite memory belief approximation-based policies measurable with respect to $\zeta_t^{(H)}$.

\begin{theorem}[Performance bound via belief mismatch]
\label{thm:value-gap}
Suppose Assumptions~\ref{assump:state-moment-regularity} and~\ref{assump:cost-regularity} hold for $\pi^\star$ and suppose additionally that input is uniformly bounded,
\begin{equation}
\sup_{t\ge 0}\mathbb{E}_{\pi^\star}[\|u_t\|^2]\in (0, \infty)
\end{equation}
for the closed-loop inputs $u_t=\pi^\star(b_t^{\pi^\star})$.
Then for every $H\ge 0$,
\begin{equation}
0\le J^\star-J_H^\star
\le
\frac{C_{\pi^\star}}{1-\gamma}\,\varepsilon_H(\pi^\star),
\end{equation}
where $J^\star:=J(\pi^\star)$ and $J_H^\star:=\hat J_H(\pi_H^\star)$.
\end{theorem}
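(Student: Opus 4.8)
The plan is to prove the two one-sided bounds of the theorem separately. The left inequality is a ``finite memory cannot help'' statement coming from optimality alone; the right inequality is the quantitative part, and its only analytic ingredient is Lemma~\ref{lem:fixed-policy-mismatch}, everything else being bookkeeping with the optimality of $\pi^\star$ and $\pi_H^\star$. The conceptual point to keep in mind is that Lemma~\ref{lem:fixed-policy-mismatch} is a \emph{single-policy} estimate: it compares $J(\pi)$ with $\hat J_H(\pi)$ along one closed loop and says nothing about optimality. To reach a statement about the two \emph{different} controllers $\pi^\star$ and $\pi_H^\star$, I would route the comparison entirely through the surrogate functional $\hat J_H(\cdot)$ and never compare the two distinct closed-loop trajectories directly, exactly as the footnote to Section~\ref{sec:guarantees} anticipates.

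For the sign (left) inequality I would first record the exact identity $\hat J_H(\pi_H^\star)=J(\pi_H^\star)$: a controller reading only $\zeta_t^{(H)}$ produces $\zeta_t^{(H)}$-measurable inputs, and since $\hat b_t^{(H),\pi_H^\star}=\mathbb{P}_{\pi_H^\star}(x_t\mid\zeta_t^{(H)})$, iterated expectations give $\mathbb{E}[\bar c(\hat b_t^{(H),\pi_H^\star},u_t)]=\mathbb{E}[c(x_t,u_t)]=\mathbb{E}[\bar c(b_t^{\pi_H^\star},u_t)]$ at every $t$; summing the discounted stages yields the identity. Because belief-based (equivalently history-dependent) policies contain every finite memory policy and $\pi^\star$ is optimal over that class, comparing the honestly-evaluated costs gives the asserted sign of the gap with no further work. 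This step uses only the belief-MDP reduction of Section~\ref{sec:setup} and the definitions of $J$ and $\hat J_H$.

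For the quantitative (right) inequality I would apply Lemma~\ref{lem:fixed-policy-mismatch} to $\pi=\pi^\star$. Its hypotheses match the theorem's (Assumptions~\ref{assump:state-moment-regularity}--\ref{assump:cost-regularity} for $\pi^\star$, the stated moment bound on the closed-loop inputs $u_t=\pi^\star(b_t^{\pi^\star})$, and $\pi^\star\in\Pi_{\mathrm{stab}}$ so that $\varepsilon_H(\pi^\star)$ is finite via Lemma~\ref{lem:forgetting-implies-eps}), and it gives $\hat J_H(\pi^\star)\le J^\star+\frac{C_{\pi^\star}}{1-\gamma}\,\varepsilon_H(\pi^\star)$. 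It then remains to establish $J_H^\star=\hat J_H(\pi_H^\star)\le\hat J_H(\pi^\star)$. The clean way to get this is to read $\pi_H^\star$ as the minimizer of the surrogate objective $\hat J_H$ over a policy class that contains $\pi^\star$ (or its finite memory realization): optimality then gives $\hat J_H(\pi_H^\star)\le\hat J_H(\pi^\star)$ immediately, and chaining with the previous display yields the claimed bound $|J^\star-J_H^\star|\le\frac{C_{\pi^\star}}{1-\gamma}\varepsilon_H(\pi^\star)$ once combined with the sign step.

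I expect the main obstacle to be exactly this last reduction: $\pi^\star$ is not literally a $\zeta_t^{(H)}$-measurable policy, so ``$\pi_H^\star$ dominates $\pi^\star$ in $\hat J_H$'' needs care. If one insists $\pi_H^\star$ be a genuine finite memory policy, the natural competitor is $\bar\pi_H(\zeta_t^{(H)}):=\pi^\star(\hat b_t^{(H)})$, and one must then control $\hat J_H(\bar\pi_H)-\hat J_H(\pi^\star)$, which is a comparison of two \emph{different} closed loops — precisely what Lemma~\ref{lem:fixed-policy-mismatch} does not provide — and which would additionally require $W_2$-Lipschitz regularity of $\pi^\star$ in the belief plus a closed-loop stability estimate under input perturbations, neither of which is in the current assumption package. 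The path of least resistance is therefore to define $J_H^\star$ through the $\hat J_H$-optimum over a class containing $\pi^\star$, so that $J_H^\star\le\hat J_H(\pi^\star)$ is free. Two minor technical points also need attention: Lemma~\ref{lem:fixed-policy-mismatch} is stated under an almost-sure input bound while the theorem only assumes $\sup_t\mathbb{E}_{\pi^\star}[\|u_t\|^2]<\infty$, so one either restricts to compact $\mathcal{U}$ or reruns the closing Cauchy--Schwarz step of that lemma with $\mathbb{E}[(1+\|u_t\|^2)^2]^{1/2}$ in place of $(1+U_\pi)$; and expressing the bound through $C_{\pi^\star}$ and $\varepsilon_H(\pi^\star)$ rather than $\pi_H^\star$-indexed constants relies on the forgetting rate $\rho$ in Assumption~\ref{assump:controlled-forgetting} being uniform over $\Pi_{\mathrm{stab}}$, so that Lemma~\ref{lem:forgetting-implies-eps} furnishes a common exponential envelope.
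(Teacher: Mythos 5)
Your quantitative argument is essentially the paper's: the paper also routes the comparison through $\hat J_H(\pi_H^\star)\le\hat J_H(\pi^\star)$ and then invokes Lemma~\ref{lem:fixed-policy-mismatch} at $\pi=\pi^\star$, and like you it has to treat $\pi^\star$ as an admissible competitor in the $\hat J_H$-minimization defining $\pi_H^\star$. Your discussion of why that step is not free (since $\pi^\star$ is not $\zeta_t^{(H)}$-measurable) and your proposed fix are an improvement over the paper, which simply asserts the domination. The two technical caveats you raise --- the almost-sure input bound assumed in Lemma~\ref{lem:fixed-policy-mismatch} versus the second-moment hypothesis of the theorem, and the need for $\rho$ to be uniform over $\Pi_{\mathrm{stab}}$ --- are likewise real mismatches in the paper's hypotheses that the paper does not address.

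The genuine problem is your sign step. Your (correct) tower-property identity $\hat J_H(\pi_H^\star)=J(\pi_H^\star)$ together with optimality of $\pi^\star$ over the belief-based class gives $J^\star=J(\pi^\star)\le J(\pi_H^\star)=J_H^\star$, i.e.\ $J^\star-J_H^\star\le 0$, which is the \emph{reverse} of the theorem's asserted $0\le J^\star-J_H^\star$; your claim that this ``gives the asserted sign'' is therefore wrong as written, and under your identity the theorem's lower bound can hold only with equality. What your chain $\hat J_H(\pi_H^\star)\le\hat J_H(\pi^\star)\le J^\star+\tfrac{C_{\pi^\star}}{1-\gamma}\varepsilon_H(\pi^\star)$ actually establishes is $0\le J_H^\star-J^\star\le\tfrac{C_{\pi^\star}}{1-\gamma}\varepsilon_H(\pi^\star)$, which is the natural statement and is consistent with your closing absolute-value bound, but you must say explicitly that the orientation displayed in the theorem is reversed rather than declare it proved. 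For what it is worth, the paper's own proof fares worse on this point: it never addresses the lower bound at all, and its middle step deduces $J(\pi^\star)-\hat J_H(\pi_H^\star)\le J(\pi^\star)-\hat J_H(\pi^\star)$ from $\hat J_H(\pi_H^\star)\le\hat J_H(\pi^\star)$, which flips the inequality. Once you correct the stated orientation of the gap, your write-up is the sound version of the intended argument.
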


\begin{proof}
Since $\pi_H^\star$ minimizes $\hat J_H(\cdot)$ over the finite memory policy class, we have
\begin{equation}
\hat J_H(\pi_H^\star)\le \hat J_H(\pi^\star).
\end{equation}
Therefore,
\begin{equation}
J^\star-J_H^\star
=
J(\pi^\star)-\hat J_H(\pi_H^\star)
\le
J(\pi^\star)-\hat J_H(\pi^\star).
\end{equation}
Applying Lemma~\ref{lem:fixed-policy-mismatch} to $\pi=\pi^\star$ yields
\begin{equation}
J(\pi^\star)-\hat J_H(\pi^\star)
\le
\frac{C_{\pi^\star}}{1-\gamma}\,\varepsilon_H(\pi^\star),
\end{equation}
which proves the claim.
\end{proof}

\begin{corollary}[Exponential decay and forgetting]
\label{cor:exp-decay}
Suppose the conditions of Theorem~\ref{thm:value-gap} hold and suppose in addition that there exist constants $C'_{\pi^\star}\in (0, \infty)$ and $\rho\in(0,1)$ such that
\begin{equation}
\varepsilon_H(\pi^\star)\le C'_{\pi^\star}\rho^H
\end{equation}
for all $H\ge 0$.
Then for all $H\ge 0$,
\begin{equation}
J^\star-J_H^\star
\le
\frac{C_{\pi^\star}C'_{\pi^\star}}{1-\gamma}\,\rho^H.
\end{equation}
\end{corollary}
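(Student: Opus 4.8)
The plan is to obtain the bound by composing the two estimates already in hand: the policy-conditional performance bound of Theorem~\ref{thm:value-gap} and the exponential belief-mismatch hypothesis. First I would invoke Theorem~\ref{thm:value-gap} with $\pi=\pi^\star$, which is legitimate because the corollary inherits all of that theorem's hypotheses (Assumptions~\ref{assump:state-moment-regularity} and~\ref{assump:cost-regularity} for $\pi^\star$, together with uniform boundedness of the closed-loop inputs $u_t=\pi^\star(b_t^{\pi^\star})$). This yields $0\le J^\star-J_H^\star\le \frac{C_{\pi^\star}}{1-\gamma}\,\varepsilon_H(\pi^\star)$, where the lower bound is recorded only for completeness and plays no role in the upper estimate.

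Second, I would substitute the assumed decay $\varepsilon_H(\pi^\star)\le C'_{\pi^\star}\rho^H$ into this inequality. Since $C_{\pi^\star},C'_{\pi^\star}\in(0,\infty)$ and $1-\gamma>0$, multiplication by the positive constant $C_{\pi^\star}/(1-\gamma)$ preserves the inequality direction, and the resulting quantity $\frac{C_{\pi^\star}}{1-\gamma}\,C'_{\pi^\star}\rho^H$ is finite for every $H\ge 0$. Collecting constants as $\frac{C_{\pi^\star}C'_{\pi^\star}}{1-\gamma}$ gives exactly the stated bound. I would also remark that the exponential hypothesis is not an extra assumption in disguise: Lemma~\ref{lem:forgetting-implies-eps} establishes $\varepsilon_H(\pi^\star)\le C'_{\pi^\star}\rho^H$ automatically whenever $\pi^\star\in\Pi_{\mathrm{stab}}$ and Assumption~\ref{assump:controlled-forgetting} holds, so in that regime the corollary follows from Theorem~\ref{thm:value-gap} and Lemma~\ref{lem:forgetting-implies-eps} alone, with $\rho$ the common contraction rate of Assumption~\ref{assump:controlled-forgetting}.

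There is no genuine analytical obstacle here; the statement is a chaining of two already-proved estimates, and the only care required is bookkeeping of the policy-conditional constants — in particular, checking that the $C'_{\pi^\star}$ in the hypothesis is the same finite, $\pi^\star$-dependent constant produced by the forgetting analysis, and that all constants remain finite because $\pi^\star\in\Pi_{\mathrm{stab}}$. Verifying this membership is precisely the purpose of restricting to the stabilizing policy class, so no further work is needed beyond substitution and relabeling.
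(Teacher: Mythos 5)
Your proposal is correct and follows essentially the same route as the paper: apply Theorem~\ref{thm:value-gap} to $\pi^\star$ and substitute the assumed exponential decay of $\varepsilon_H(\pi^\star)$, collecting constants. The additional remark connecting the hypothesis to Lemma~\ref{lem:forgetting-implies-eps} matches the paper's own discussion following the corollary.
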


\begin{proof}
The bound follows by substituting $\varepsilon_H(\pi^\star)\le C'_{\pi^\star}\rho^H$ into Theorem~\ref{thm:value-gap}.
\end{proof}

Corollary~\ref{cor:exp-decay} shows that the finite memory optimality gap decays exponentially in $H$ whenever $\varepsilon_H(\pi^\star)$ decays exponentially in $H$.
Section~\ref{sec:finite-memory} provides sufficient conditions for such exponential decay via controlled forgetting of belief update under stabilizing policies.


\section{LQG Specialization and Numerical Results}
\label{sec:lqg}

In this section, we empirically validates the theoretical results from Sections~\ref{sec:finite-memory} and \ref{sec:guarantees} using an LQG system.
Our goal is to test whether truncated IO history induces a belief mismatch and whether this mismatch explains performance gap as predicted by the theory.
Although belief computation in LQG admits a closed-form Kalman recursion, this tractability is used here strictly as an experimental advantage.
Kalman filter allows exact evaluation of the true belief, the finite memory belief approximation, and the Wasserstein discrepancy between them.
Finite memory belief approximation is constructed to discard past information and therefore induces a loss of information as $H$ decreases.

We consider the PO linear stochastic system
\begin{align}
x_{t+1} &= A x_t + B u_t + w_t,
\qquad w_t \sim \mathcal{N}(0,\Sigma_w), \\
y_t &= C x_t + v_t,
\qquad v_t \sim \mathcal{N}(0,\Sigma_v),
\end{align}
with Gaussian prior $x_0 \sim \mathcal{N}(m_0,P_0)$ and quadratic cost,
\begin{equation}
c(x_t,u_t) = x_t^\top Q x_t + u_t^\top R u_t,
\end{equation}
and performance is evaluated using the discounted infinite-horizon objective defined in (\ref{eq:true-belief-cost}) and (\ref{eq:approx-belief-cost}).

The true belief $b_t=\mathcal{N}(m_t,P_t)$ is obtained by the Kalman filter, and the control policy is fixed to the LQG controller
\begin{equation}
u_t = -K m_t,
\end{equation}
where $K$ is the infinite-horizon LQR gain.
Throughout this section, the closed-loop execution is always generated by this policy, and only the belief argument inside the belief-level cost is modified, exactly matching the fixed-policy comparison analyzed theoretically.

The finite memory belief approximation $\hat b_t^{(H)}$ is implemented using
a window-restart construction consistent with Lemma~\ref{lem:window-representation}.
For each $t \ge H$, the belief recursion is reinitialized at time $s=t-H$
using a boundary belief $\tilde b_s$ depends only on the single observation $y_s$.
In the implementation, $\tilde b_s$ is obtained by performing a Kalman measurement update from the fixed prior $(m_0,P_0)$ using $y_s$.
The belief is then propagated forward for $H$ steps using only the truncated IO $(u_{s:s+H-1},y_{s+1:s+H})$ to obtain $\hat b_t^{(H)}$.

Since both $b_t$ and $\hat b_t^{(H)}$ are Gaussian,the belief mismatch is computed using the closed-form Wasserstein-2 distance,
\begin{multline}
W_2^2(b_t,\hat b_t^{(H)}) =
\|m_t-\hat m_t^{(H)}\|^2
+ \\
\mathrm{Tr}\Big(
P_t+\hat P_t^{(H)}
-2\big(P_t^{1/2}\hat P_t^{(H)}P_t^{1/2}\big)^{1/2}
\Big).
\end{multline}
The policy-conditional error $\varepsilon_H(\pi)$
is estimated via Monte-Carlo averaging over multiple rollouts.

The system is instantiated as an LQG double integrator,
\begin{equation}
A =
\begin{bmatrix}
1 & \Delta t\\
0 & 1
\end{bmatrix},
\quad
B =
\begin{bmatrix}
\frac{1}{2}\Delta t^2\\
\Delta t
\end{bmatrix},
\quad
C =
\begin{bmatrix}
1 & 0
\end{bmatrix},
\end{equation}
so that only the position is observed.
All experiments use a fixed controller and sweep the memory length $H \in \{0,1,2,5,10,20,50,100\}$, with horizon $T=1000$ and 50 random seeds.

Figure~\ref{fig:lqg-eps-vs-H} plots the estimated belief mismatch $\varepsilon_H(\pi)$ as a function of the memory length $H$.
Consistent with Lemma~\ref{lem:forgetting-implies-eps},
the mismatch decays approximately exponentially in $H$,
appearing as a linear trend on log axes.

\begin{figure}[t]
    \centering
    \includegraphics[width=0.8\linewidth]{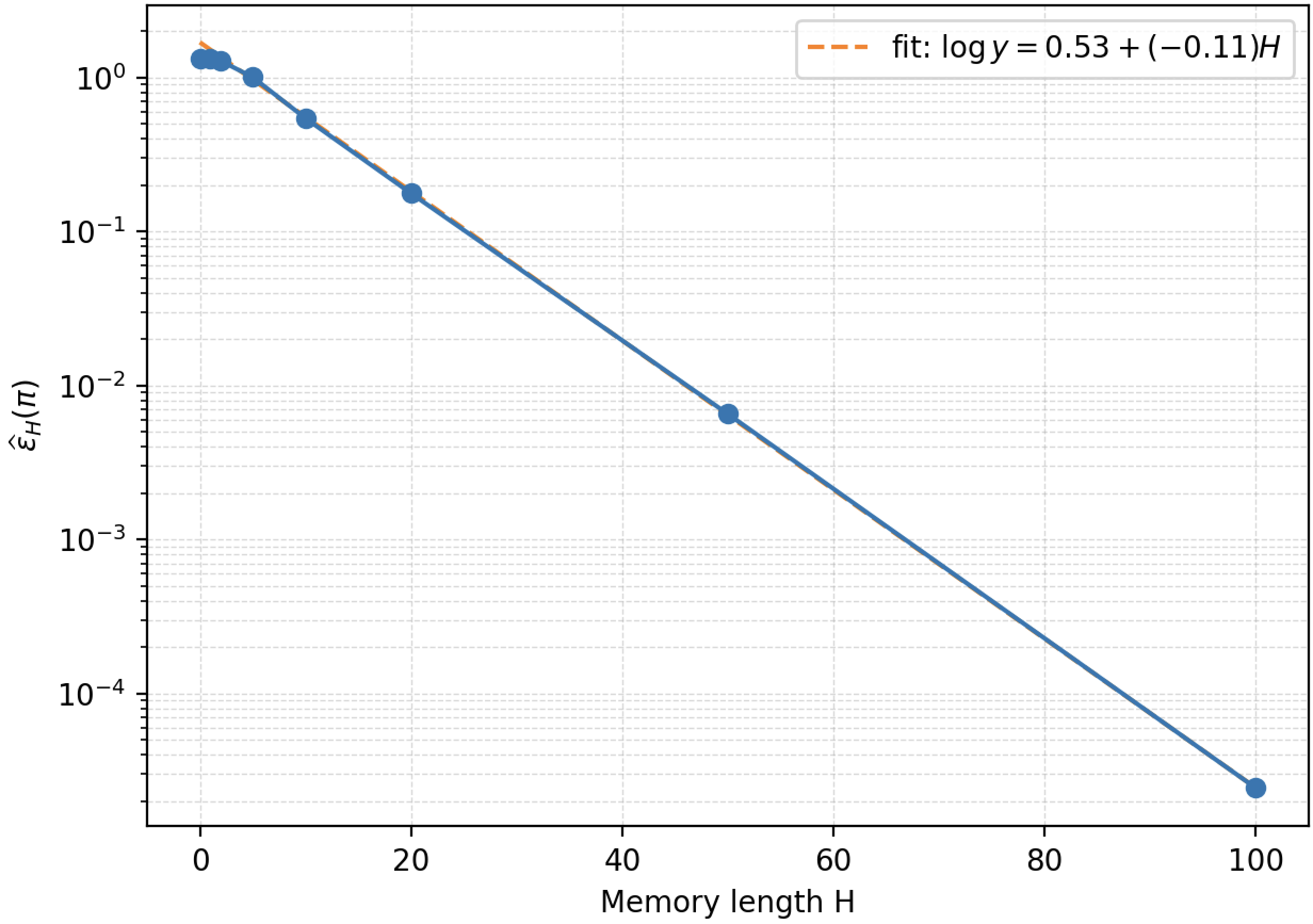}
    \caption{Belief mismatch $\varepsilon_H(\pi)$ versus memory length $H$ in log scale (y-axis).
    The approximately linear decay confirms exponential forgetting under closed-loop operation.}
    \label{fig:lqg-eps-vs-H}
\end{figure}

Figure~\ref{fig:lqg-gap-vs-eps} examines the relationship between belief mismatch and fixed-policy performance gap.
As predicted by Lemma~\ref{lem:fixed-policy-mismatch}, the cost gap scales approximately linearly with $\varepsilon_H(\pi)$,
which appears as a linear trend on log axes.

\begin{figure}[t]
    \centering
    \includegraphics[width=0.8\linewidth]{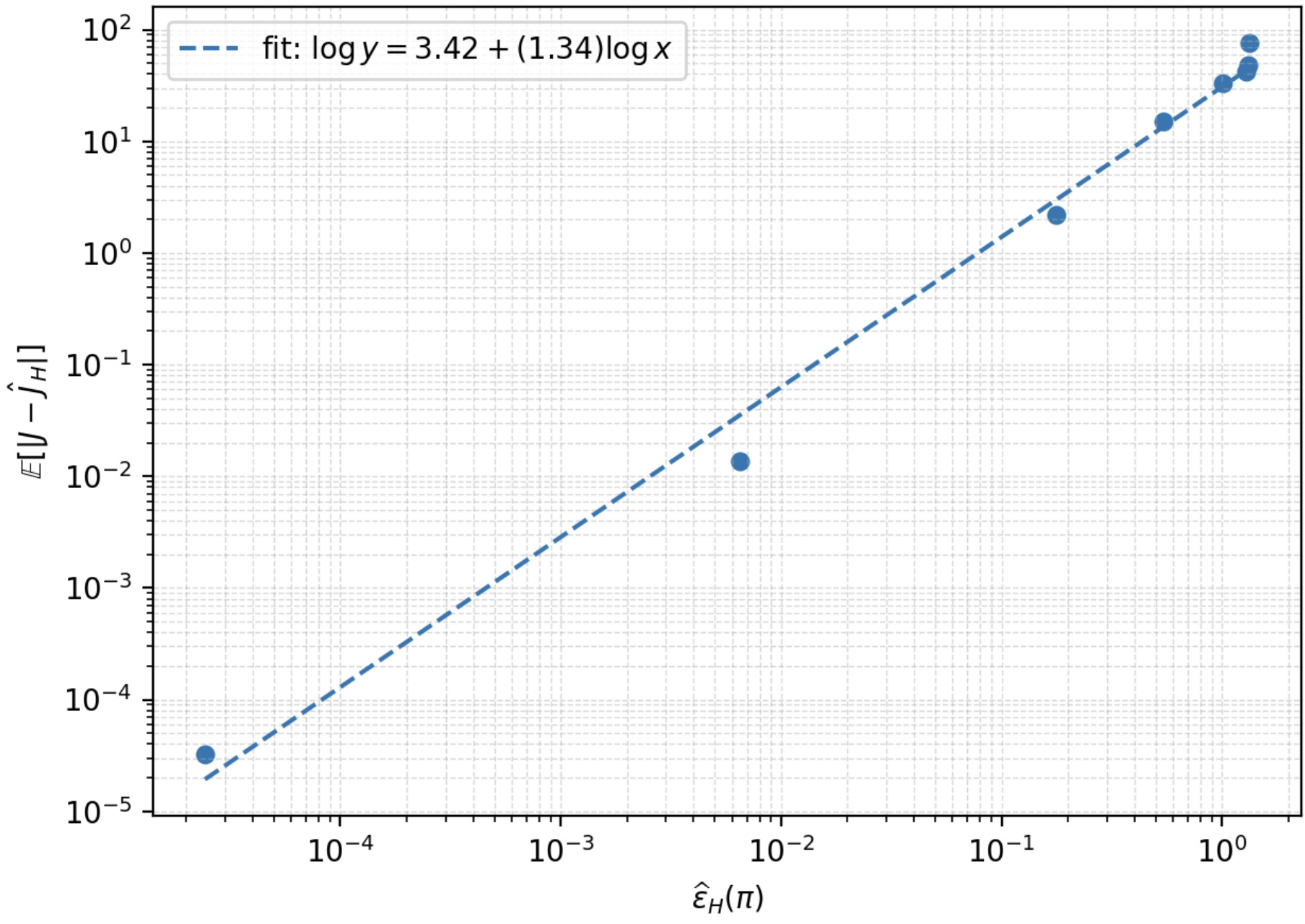}
    \caption{Cost mismatch versus belief mismatch under fixed-policy in log-log scale.
    The observed linear scaling supports the theoretical bound
    $|J(\pi)-\hat J_H(\pi)| \propto \varepsilon_H(\pi)$.}
    \label{fig:lqg-gap-vs-eps}
    \vspace{-4mm}
\end{figure}

For diagnostic purposes, Figure~\ref{fig:lqg-w2-time} visualizes the time evolution of the belief mismatch for representative values of $H$.
The mismatch is largest during early transients and stabilizes after, illustrating how finite memory primarily affects the filter’s ability to accumulate information over time.

\begin{figure}[t]
    \centering
    \includegraphics[width=0.8\linewidth]{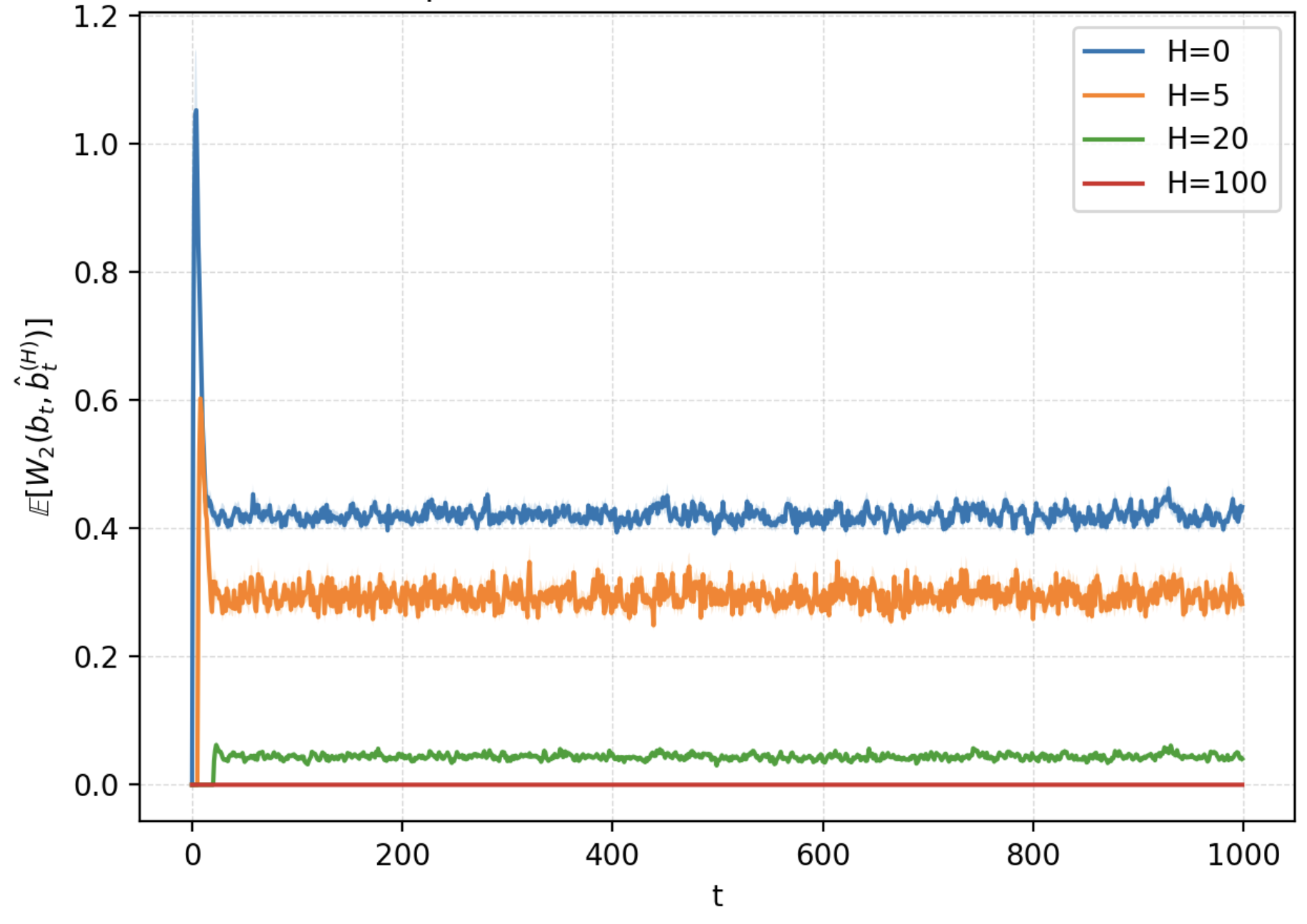}
    \caption{Time profile of the belief mismatch    $W_2(b_t,\hat b_t^{(H)})$ for selected memory lengths.
    Curves show mean $\pm$ standard error over 50 random seeds.}
    \label{fig:lqg-w2-time}
    \vspace{-4mm}
\end{figure}

Overall, the LQG experiments provide a concrete validation of the paper’s central mechanism.
Truncated IO history induces a measurable belief mismatch, which in turn explains performance degradation under closed-loop control.
The role of LQG here is not to trivialize belief approximation, but to provide a setting in which information loss, belief error, and cost degradation can be exactly in closed-form.


\section{Conclusions}

This paper studied finite memory POSOC by interpreting truncated IO histories as inducing finite memory belief approximations.
By measuring information loss in the Wasserstein metric and evaluating performance under a fixed closed-loop execution, we established a relationship between belief mismatch and value degradation.
Under controlled forgetting, the belief approximation error decays exponentially with memory length, yielding an explicit exponential bound on the performance gap.
Our analysis shows that finite memory should be viewed as an approximation of the underlying information state rather than merely a restriction on the policy class.
We also identified fundamental limitations of finite memory control, including the necessity of retaining input history and unavoidable exponential memory requirements in general PO systems.
Specialization to LQG systems showed that these effects persist even when belief computation is tractable, and numerical experiments verified that the theoretical bounds capture the observed scaling behavior.


\addtolength{\textheight}{-12cm}


\printbibliography
\end{document}